\documentclass[final]{l4dc2024}


\title[Physically Consistent Modeling \& Identification of Nonlinear Friction with D-
GP]{Physically Consistent Modeling \& Identification of Nonlinear Friction with Dissipative
Gaussian Processes}
\usepackage{times}
\usepackage{algorithm}
\usepackage{algpseudocode}
\usepackage{bm}



\author{
 \Name{Rui Dai}$^{\dagger, *}$ \Email{rui.dai@iit.it}\\
 \Name{Giulio Evangelisti}$^{*}$ \Email{giulio.evangelisti@tum.de}\\
 \Name{Sandra Hirche}$^{*}$ \Email{hirche@tum.de} \\
 \addr $^{\dagger}$Humanoids and Human-Centered Mechatronics Research Line, Istituto Italiano di Tecnologia (IIT), 16163 Genova, Italy. \\
 \addr $^{*}$Chair of Information-oriented Control, Department of Electrical and Computer Engineering, Technical University of Munich, 80333 Munich, Germany
}

\newtheorem{theorem_}{Theorem}
\newtheorem{assumption}{Assumption}
\newtheorem{corollary_}{Corollary}

\begin{document}

\maketitle

\begin{abstract}%
Friction modeling has always been a challenging problem due to the complexity of real physical systems. Although a few state-of-the-art structured data-driven methods show their efficiency in nonlinear system modeling, deterministic passivity as one of the significant characteristics of friction is rarely considered in these methods.
To address this issue, we propose a Gaussian Process based model that preserves the inherent structural properties such as passivity. A matrix-vector physical structure is considered in our approaches to ensure physical consistency, in particular, enabling a guarantee of positive semi-definiteness of the damping matrix.
An aircraft benchmark simulation is employed to demonstrate the efficacy of our methodology. Estimation accuracy and data efficiency are increased substantially by considering and enforcing more structured physical knowledge.
Also, the fulfillment of the dissipative nature of the aerodynamics is validated numerically.
\end{abstract}

\begin{keywords}%
  Friction Identification, Gaussian Process, Passivity, Dissipativity.
\end{keywords}

\section{Introduction}
Passive systems are a class of dynamical systems
that can only increase stored energy when there is external energy inflow.
This property holds practical significance in Euler-Lagrangian systems, including robots, aircraft, vehicles, etc.
For mechanical systems, 
friction is a direct source of dissipation.
Modeling with a guarantee of its passivity can help ensure the system's stability by the passivity theorem \citep{khalil_nonlinear_2002}.
In addition, its accurate modeling is also beneficial in improving control accuracy by friction compensation \citep{Olsson_98}.

For decades, researchers have constructed many analytical models of friction that can characterize specific characteristics, from static to dynamic \citep{Bliman_1995, de_1995, Lampaert_2002, FENG_2019}. However, the characteristics are usually nonlinear, and the behavior of friction may vary significantly under different conditions, such as hysteretic effects \citep{Hysteretic_2008}.
There is no universal physical model that can be applied to all situations.
As a promising alternative, several intelligent model-free computing approaches can capture the complexity and nonlinear properties of a system by analyzing observations from the environments \citep{Huang_2019, Peng_Mechanistic_2022, Auriol_Drill_2022, Dong_Friction_2021}. However, modeling the dynamics from data is often challenging because of the high dimensionality of the real-world systems \citep{Geist_2021}. In addition, data collection on physical systems is expensive and time-consuming.
To address this issue, recent efforts strive to integrate physical structure as prior knowledge with data-driven modeling to improve learning efficiency.

\subsection{Related Work}
The structured models inherit the potential to combine the advantages of both analytical and data-driven modeling while avoiding their limitations when used individually \citep{Geist_2021, Nelles_2020}. Rigid-body dynamics represent a valuable source of prior structural knowledge for mechanical systems
\citep{Jemin_2019}. 
\cite{Nguyen_2010} introduce two semiparametric regression approaches, where the physical model knowledge can be incorporated into either the mean function or the kernel of a nonparametric Gaussian Process Regression (GPR). Their findings demonstrate that semiparametric models, trained with rigid body dynamics as a prior, outperform standard rigid body dynamics models when applied to real-world data. Physical constraints are integrated into the learning process by introducing structure to the learned parameters in \cite{Giovanni_2020}. This framework enables the acquisition of physically plausible dynamics through gradient descent, enhancing both training speed and the generalization of learned dynamic models.
In addition, structured learning frameworks can guarantee physical properties such as energy conservation of Lagrangian systems \citep{Evangelisti_Physically_2022}. This Lagrangian-GP-based learning framework is extended to include dissipative systems in \cite{Evagelisti_2024_Data-Driven} but a passivity preserving guarantee is not given.
In the literature, passivity as a significant phenomenon in mechanical systems is rarely considered.
\cite{Koch_2020_Verifying} verifies dissipativity of linear time-invariant systems from noisy data with guaranteed robustness and the sector bounds which can probabilistically show the passivity is derived in \cite{Fiedler_2021_learning}. However, to the best knowledge of the authors, deterministic guarantee of passivity from data for nonlinear systems has not been investigated.

\subsection{Contribution and Structure}
The main contribution of our work is proposing a physically structured Gaussian Process that preserves passivity for friction modelling and identification.
We consider a matrix-vector structure for the prior mean and kernels in GP, with a full and diagonal damping matrix, respectively. We investigate their passivity guarantee of the damping estimates by showing the positive semi-definiteness of the damping matrix. 
 
After introducing the formulation of our problem in Section~\ref{sec: Problem Formulation and Passivity}, we review the basic GP framework in Section~\ref{sec: Gaussian Process Framework}.
In Section~\ref{sec: Dissipative Gaussian Processes}, structured dissipative GP that provide passivity guarantees are proposed. 
We evaluate our approach and discuss the results in Section~\ref{sec: Simulation and Results} and conclude in Section~\ref{sec: Conclusion}.

\subsection{Notation}
We use bold lower/upper case characters to denote vectors/matrices, respectively. $\boldsymbol{I}$ denotes the identity. 
$[\boldsymbol{x}_i]_{i=1}^\mathcal{D} \in \mathbb{R}^{N\mathcal{D}}$ denotes a stacked vector of $\mathcal{D}$ indexed vectors $\boldsymbol{x}_i \in \mathbb{R}^{N}$ and $[\boldsymbol{X}_i]_{i=1}^\mathcal{D} \in \mathbb{R}^{N\mathcal{D} \times N}$ a stacked matrix of $\mathcal{D}$ indexed matrices $\boldsymbol{X}_i \in \mathbb{R}^{N \times N}$. 
The direct sum is denoted as $\oplus$. We use $\succeq$ and $\preceq$ to denote matrix inequality.

\section{Problem Formulation and Passivity}
\label{sec: Problem Formulation and Passivity}

Our work considers Euler-Lagrange systems \citep{Ortega_1998} whose equations of motion are given by

\begin{equation}
\boldsymbol{M}(\boldsymbol{q}) \ddot{\boldsymbol{q}}+\boldsymbol{C}(\boldsymbol{q}, \dot{\boldsymbol{q}}) \dot{\boldsymbol{q}}+\boldsymbol{g}(\boldsymbol{q})+\boldsymbol{D}(\dot{\boldsymbol{q}}) \dot{\boldsymbol{q}}=\boldsymbol{\tau},
\label{eq: EL motion equation}
\end{equation}
where $\boldsymbol{q} \in \mathbb{R}^{N}$ is the generalized coordinate, $\boldsymbol{M}(\boldsymbol{q}) \in \mathbb{R}^{N \times N}$ the positive definite inertia matrix, $\boldsymbol{C}(\boldsymbol{q}, \dot{\boldsymbol{q}}) \in \mathbb{R}^{N \times N}$ the Coriolis matrix, $\boldsymbol{g}(\boldsymbol{q}) \in \mathbb{R}^{N}$ the generalized potential forces and $\boldsymbol{D}(\dot{\boldsymbol{q}}) \in \mathbb{R}^{N \times N}$ the damping matrix. Note that we neglect the positional dependence of the damping matrix to prevent notational complexity. $\boldsymbol{\tau} \in \mathbb{R}^{N}$ is the 
input actuation applied to the system.
This formulation can be applied to a large varieties of rigid body systems including robots, aircraft, etc.
According to the definition of passivity in \cite{Willems_1972},
passivity of this system can be guaranteed if the damping matrix $\boldsymbol{D}(\dot{\boldsymbol{q}})$ is positive semidefinite, i.e. $\dot{\boldsymbol{q}}^{\top}\boldsymbol{D}(\dot{\boldsymbol{q}}) \dot{\boldsymbol{q}} \geq 0$.

\section{Gaussian Process Framework}
\label{sec: Gaussian Process Framework}
We first review the Gaussian Process regression framework based on \cite{rasmussen_gaussian_2006} in this section. A Gaussian process is a collection of random variables and any finite number of which have a joint Gaussian distribution. It can be used to describe a distribution over functions and consider inference in a function space.

Considering $N$-dimensional inputs $\boldsymbol{x}, \boldsymbol{x}^{\prime} \in \mathcal{X}$ in a continuous domain $\mathcal{X} \subseteq \mathbb{R}^{N}$, and a real process $f(\boldsymbol{x}) \in \mathbb{R}$, a GP is completely specified by its mean $m(\boldsymbol{x})=\mathbb{E}[f(\boldsymbol{x})]$ and covariance function $k\left(\boldsymbol{x}, \boldsymbol{x}^{\prime}\right)=\mathbb{E}\left[(f(\boldsymbol{x})-m(\boldsymbol{x}))\left(f\left(\boldsymbol{x}^{\prime}\right)-m\left(\boldsymbol{x}^{\prime}\right)\right)\right]$, written as
$
f(\boldsymbol{x}) \sim \mathcal{G} \mathcal{P}\left(m(\boldsymbol{x}), k(\boldsymbol{x}, \boldsymbol{x}^{\prime})\right).
$
The mean function provides the prior knowledge of the unknown function, while the kernel determines higher-level functional properties such as smoothness. It specifies the similarity between two values of the function evaluated on two specific inputs.

The GP inherits the Gaussian distribution properties
and we can infer the function from the given observations.
Considering $\mathcal{D}$ finite noisy observations $\boldsymbol{y} = [y_d]_{d=1}^{\mathcal{D}}$ with $y_d = f(\boldsymbol{x}_d) + \varepsilon$, where $\varepsilon$ is an additive i.i.d. Gaussian noise process with variance $\sigma_{\varepsilon}^2$, then the covariance of the noisy observations is
$
\operatorname{cov}(\boldsymbol{y}) = \boldsymbol{K}(\boldsymbol{X}, \boldsymbol{X}) + \boldsymbol{\Sigma} = \boldsymbol{K}(\boldsymbol{X}, \boldsymbol{X}) + \sigma_{\varepsilon}^2 \boldsymbol{I},
$
where $\boldsymbol{K}(\boldsymbol{X}, \boldsymbol{X}) \in \mathbb{R}^{\mathcal{D} \times \mathcal{D}}$
denotes the covariance evaluated at all pairs of training points. The joint Gaussian distribution of the observations and the test points can be written as

\begin{equation}
\left[\begin{array}{c}
\boldsymbol{y} \\
{\boldsymbol{f}(\boldsymbol{X}_*)}
\end{array}\right] \sim \mathcal{N}\left(\left[\begin{array}{c}\boldsymbol{m}(\boldsymbol{X}) \\ \boldsymbol{m}(\boldsymbol{X}_*)\end{array}\right],\left[\begin{array}{cc}
\boldsymbol{K}(\boldsymbol{X}, \boldsymbol{X})+\sigma_{\varepsilon}^2 \boldsymbol{I} & \boldsymbol{K}\left(\boldsymbol{X}, \boldsymbol{X}_*\right) \\
\boldsymbol{K}\left(\boldsymbol{X}_*, \boldsymbol{X}\right) & \boldsymbol{K}\left(\boldsymbol{X}_*, \boldsymbol{X}_*\right)
\end{array}\right]\right),
\end{equation}
where $\boldsymbol{m}(\cdot)$ denotes the prior mean of the function, $\boldsymbol{K}\left(\boldsymbol{X}, \boldsymbol{X}_*\right)$ denotes the covariance evaluated at all pairs of training data and test data, $\boldsymbol{K}\left(\boldsymbol{X}_*, \boldsymbol{X}_*\right)$ denotes the covariance evaluated at all pairs of test data and $\boldsymbol{K}\left(\boldsymbol{X}_*, \boldsymbol{X}\right) = \boldsymbol{K}^{\top}\left(\boldsymbol{X}, \boldsymbol{X}_*\right)$. 
The posterior conditional distribution based on the prior distribution on the observations can be obtained by
$
\boldsymbol{f}(\boldsymbol{X}_*) \mid \boldsymbol{X}, \boldsymbol{y}, \boldsymbol{X}_* \sim \mathcal{N}\left(\overline{\boldsymbol{f}}(\boldsymbol{X_*}), \operatorname{cov}\left(f(\boldsymbol{X_*})\right)\right)
$,
where
$
\overline{\boldsymbol{f}}(\boldsymbol{X}_*) 
=\boldsymbol{m}(\boldsymbol{X}_*) + \boldsymbol{K}\left(\boldsymbol{X}_*, \boldsymbol{X}\right)\left[\boldsymbol{K}(\boldsymbol{X}, \boldsymbol{X})+\sigma_{\varepsilon}^2 \boldsymbol{I}\right]^{-1} (\boldsymbol{y} - \boldsymbol{m}(\boldsymbol{X})),
$
$
\operatorname{cov}\left(f(\boldsymbol{X_*})\right) =\boldsymbol{K}\left(\boldsymbol{X}_*, \boldsymbol{X}_*\right)-\boldsymbol{K}\left(\boldsymbol{X}_*, \boldsymbol{X}\right)\left[\boldsymbol{K}(\boldsymbol{X}, \boldsymbol{X})+\sigma_{\varepsilon}^2 \boldsymbol{I}\right]^{-1} \boldsymbol{K}\left(\boldsymbol{X}, \boldsymbol{X}_*\right).
\label{eq: gp regression}
$

\section{Dissipative Gaussian Processes}
\label{sec: Dissipative Gaussian Processes}

After introducing the passive system and the basic GP framework, we move on to our contribution of a physically consistent GP enforcing a damping torque with a matrix-vector form. 
Passivity is guaranteed analytically via a sufficient condition for ensuring positive semidefiniteness of the damping matrix.
For this, we analyze the posterior mean estimates of a 
full, and then diagonal, damping matrix.

\subsection{Matrix-vector Structured Gaussian Processes for Full Damping Matrices}
\label{sec: Damping matrix with Full Elements}
 
In this section, we first
investigate the general case of a full damping matrix.

\subsubsection{GP Structure and Formulation}
\label{sec: GP Structure and Formulation}

For the unknown damping term in the physical dynamic system~\eqref{eq: EL motion equation}, we assume the independence of the components in the damping matrix.

\begin{assumption}
    Consider a damping torque vector in the form $\boldsymbol{\tau}^d(\Dot{\boldsymbol{q}})=\boldsymbol{D}(\Dot{\boldsymbol{q}})\Dot{\boldsymbol{q}}$, where $\Dot{\boldsymbol{q}} \in \mathbb{R}^N$ is the generalized velocity. The elements $d_{mn}(\dot{\bm{q}})$ of the damping matrix $\boldsymbol{D}(\Dot{\boldsymbol{q}}) = [d_{mn}(\Dot{\boldsymbol{q}})]_{m,n = 1}^N \in \mathbb{R}^{N \times N}$ are 
    stochastically independent, i.e., $\mathbb{E}[d_{mn}d_{ij}]=\mathbb{E}[d_{mn}]\mathbb{E}[d_{ij}]$ for $m,n\neq i,j$.
    \label{ass: full independent damping matrix}
\end{assumption}
This assumption is reasonable because we can choose a suitable input coordinate such that each dimension is independent and the torque is in its span.
We model each element $d_{mn}(\Dot{\boldsymbol{q}})$ independently
\begin{equation}
d_{mn}(\Dot{\boldsymbol{q}}) \sim \mathcal{G} \mathcal{P}\left(m_{d_{mn}}(\Dot{\boldsymbol{q}}), k_{d_{mn}}\left(\Dot{\boldsymbol{q}}, \Dot{\boldsymbol{q}}^{\prime}\right)\right)
\label{eq: full D element GP model} \, .
\end{equation}

\begin{lemma}
    Under Ass.~\ref{ass: full independent damping matrix} and the independent GP models~\eqref{eq: full D element GP model}, denoting $\boldsymbol{k}_{d_n} = [k_{d_{mn}}\left(\Dot{\boldsymbol{q}}, \Dot{\boldsymbol{q}}^{\prime}\right)]_{m=1}^{N}$ as the n-th column of the matrix kernel associated with $\boldsymbol{D}(\Dot{\boldsymbol{q}})$, the covariance of the damping torque $\boldsymbol{\tau}^d(\Dot{\boldsymbol{q}})$ is
    \begin{equation}
        \boldsymbol{K}_{\boldsymbol{\tau}}(\Dot{\boldsymbol{q}}, \Dot{\boldsymbol{q}}^{\prime} ) = \operatorname{cov}(\boldsymbol{\tau}^d(\Dot{\boldsymbol{q}})) = 
        \sum_{n=1}^{N} \Dot{q}_n \Dot{q}_n^{\prime} \mathrm{diag}(\boldsymbol{k}_{d_n})
    \label{eq: full Ktau}
    \end{equation}
\end{lemma}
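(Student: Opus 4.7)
The proof is essentially a careful bookkeeping of covariances combined with the independence assumption, so I would keep it short and explicit.

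The plan is to compute the $(m,i)$-entry of $\boldsymbol{K}_{\boldsymbol{\tau}}(\dot{\boldsymbol{q}},\dot{\boldsymbol{q}}')$ directly from its definition and show that (a) off-diagonal entries vanish and (b) the $m$-th diagonal entry has exactly the form prescribed by the statement. First I would write
\begin{equation*}
[\boldsymbol{K}_{\boldsymbol{\tau}}(\dot{\boldsymbol{q}},\dot{\boldsymbol{q}}')]_{mi} = \operatorname{cov}\!\left(\tau^d_m(\dot{\boldsymbol{q}}),\,\tau^d_i(\dot{\boldsymbol{q}}')\right) = \operatorname{cov}\!\left(\sum_{n=1}^{N} d_{mn}(\dot{\boldsymbol{q}})\,\dot{q}_n,\ \sum_{j=1}^{N} d_{ij}(\dot{\boldsymbol{q}}')\,\dot{q}'_j\right),
\end{equation*}
emphasizing that the velocities $\dot{q}_n,\dot{q}'_j$ are deterministic test inputs, while the only randomness lies in the GP-distributed entries $d_{mn}$.

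Next, by bilinearity of covariance in its arguments, I can pull the scalars $\dot{q}_n\dot{q}'_j$ outside:
\begin{equation*}
[\boldsymbol{K}_{\boldsymbol{\tau}}(\dot{\boldsymbol{q}},\dot{\boldsymbol{q}}')]_{mi} = \sum_{n=1}^{N}\sum_{j=1}^{N} \dot{q}_n\,\dot{q}'_j\,\operatorname{cov}\!\left(d_{mn}(\dot{\boldsymbol{q}}),\,d_{ij}(\dot{\boldsymbol{q}}')\right).
\end{equation*}
Now I would invoke Assumption~\ref{ass: full independent damping matrix}: for $(m,n)\neq(i,j)$ the elements $d_{mn}$ and $d_{ij}$ are stochastically independent, so their covariance vanishes. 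Only the terms with $m=i$ and $n=j$ remain, and for those the covariance equals the prescribed scalar kernel $k_{d_{mn}}(\dot{\boldsymbol{q}},\dot{\boldsymbol{q}}')$ by the independent GP models~\eqref{eq: full D element GP model}.

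This yields $[\boldsymbol{K}_{\boldsymbol{\tau}}]_{mi}=0$ for $m\neq i$, and
\begin{equation*}
[\boldsymbol{K}_{\boldsymbol{\tau}}(\dot{\boldsymbol{q}},\dot{\boldsymbol{q}}')]_{mm} = \sum_{n=1}^{N} \dot{q}_n\,\dot{q}'_n\,k_{d_{mn}}(\dot{\boldsymbol{q}},\dot{\boldsymbol{q}}').
\end{equation*}
Recognizing that the $m$-th diagonal entry of $\operatorname{diag}(\boldsymbol{k}_{d_n})$ is precisely $k_{d_{mn}}(\dot{\boldsymbol{q}},\dot{\boldsymbol{q}}')$, I would assemble the diagonal matrix $\boldsymbol{K}_{\boldsymbol{\tau}}=\sum_{n=1}^{N}\dot{q}_n\dot{q}'_n\operatorname{diag}(\boldsymbol{k}_{d_n})$, which is exactly~\eqref{eq: full Ktau}.

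No step poses a real obstacle; the only subtle point to be explicit about is that the covariance is taken solely over the random GP entries $d_{mn}$, with $\dot{\boldsymbol{q}}$ and $\dot{\boldsymbol{q}}'$ acting purely as deterministic evaluation points, which is what allows the velocity factors to be pulled out by bilinearity before independence eliminates the cross terms.
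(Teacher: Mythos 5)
Your proof is correct and takes essentially the same route as the paper: expand the torque linearly in the random entries of $\boldsymbol{D}$, use bilinearity of covariance, and let Assumption~\ref{ass: full independent damping matrix} kill all cross terms. The only difference is presentational — you work entrywise while the paper groups the computation by columns of $\boldsymbol{D}$; your version has the minor advantage of making explicit that within-column independence is what forces the off-diagonal entries of $\boldsymbol{K}_{\boldsymbol{\tau}}$ to vanish, a step the paper leaves implicit in its final equality.
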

\begin{proof}
    Define a mean matrix $\bm{M_D}(\Dot{\boldsymbol{q}}) = [m_{d_{mn}}(\Dot{\boldsymbol{q}})]_{m,n = 1}^N$  corresponding to $\boldsymbol{D}(\Dot{\boldsymbol{q}})$, then
    \begin{equation}
    \begin{aligned}
        \operatorname{cov}(\boldsymbol{\tau}^d(\Dot{\boldsymbol{q}})) &= \mathbb{E}\left[\left(\boldsymbol{D}(\Dot{\boldsymbol{q}})\Dot{\boldsymbol{q}} - \bm{M_D}(\Dot{\boldsymbol{q}})\Dot{\boldsymbol{q}}\right)\left(\boldsymbol{D}(\Dot{\boldsymbol{q}}^{\prime})\Dot{\boldsymbol{q}}^{\prime} - \bm{M_D}(\Dot{\boldsymbol{q}}^{\prime})\Dot{\boldsymbol{q}}^{\prime}\right)^{\top}\right] \\
        &= \mathbb{E}\left[\left(\sum_{n=1}^{N}\Dot{q}_n\left(\boldsymbol{d}_{\cdot n}(\Dot{\boldsymbol{q}})-\boldsymbol{m}_{\boldsymbol{d}_{\cdot n}}(\Dot{\boldsymbol{q}})\right)\right)\left(\sum_{m=1}^{N}\Dot{q}_m^{\prime}\left(\boldsymbol{d}_{\cdot m}(\Dot{\boldsymbol{q}})-\boldsymbol{m}_{\boldsymbol{d}_{\cdot m}}(\Dot{\boldsymbol{q}})\right)\right)^{\top}\right] \\
        &= \sum_{n=1}^{N}\sum_{m=1}^{N}\Dot{q}_n\Dot{q}_m^{\prime}\mathbb{E}\left[\left(\boldsymbol{d}_{\cdot n}(\Dot{\boldsymbol{q}})-\boldsymbol{m}_{\boldsymbol{d}_{\cdot n}}(\Dot{\boldsymbol{q}})\right)\left(\boldsymbol{d}_{\cdot m}(\Dot{\boldsymbol{q}})-\boldsymbol{m}_{\boldsymbol{d}_{\cdot m}}(\Dot{\boldsymbol{q}})\right)^{\top}\right] \\
        &= \sum_{n=1}^{N}\Dot{q}_n\Dot{q}_n^{\prime}\mathbb{E}\left[\left(\boldsymbol{d}_{\cdot n}(\Dot{\boldsymbol{q}})-\boldsymbol{m}_{\boldsymbol{d}_{\cdot n}}(\Dot{\boldsymbol{q}})\right)\left(\boldsymbol{d}_{\cdot n}(\Dot{\boldsymbol{q}})-\boldsymbol{m}_{\boldsymbol{d}_{\cdot n}}(\Dot{\boldsymbol{q}})\right)^{\top}\right] \\
        & = \sum_{n=1}^{N} \Dot{q}_n \Dot{q}_n^{\prime} \mathrm{diag}(\boldsymbol{k}_{d_n})
    \end{aligned}
    \end{equation}
\end{proof}
Let us model the torques as a GP
\begin{equation}
    \boldsymbol{\tau}^d(\Dot{\boldsymbol{q}}) \sim \mathcal{G} \mathcal{P}\left(\boldsymbol{m}_{\boldsymbol{\tau}}(\Dot{\boldsymbol{q}}), \boldsymbol{K}_{\boldsymbol{\tau}}(\Dot{\boldsymbol{q}}, \Dot{\boldsymbol{q}}^{\prime} )\right),
    \label{eq: Diag tau GP}
\end{equation}
with the structured prior mean $\boldsymbol{m}_{\boldsymbol{\tau}}(\Dot{\boldsymbol{q}})=\bm{M_D}(\Dot{\boldsymbol{q}})\Dot{\boldsymbol{q}}$ and kernel~\eqref{eq: full Ktau}.
Assume access to a number of $\mathcal{D}$ noise-free generalized velocity measurements $\Dot{\boldsymbol{Q}} = [\Dot{\boldsymbol{q}}_i^{\top}]_{i=1}^{\mathcal{D}}$ and noise-corrupted output measurements $\boldsymbol{Y} = [\boldsymbol{y}_i^{\top}]_{i=1}^{\mathcal{D}}$, where 
$
    \boldsymbol{y}_i = \boldsymbol{\tau}_{i}^d(\Dot{\boldsymbol{q}}_i) + \boldsymbol{\varepsilon}_i, \; \boldsymbol{\varepsilon}_i \sim \mathcal{N}(\boldsymbol{0}, \sigma_{\varepsilon}^2 \boldsymbol{I}).
$
Introducing a stacked vector of outputs $\boldsymbol{y} = \operatorname{vec}(\boldsymbol{Y}^{\top})$ with mean $\boldsymbol{m}_{\boldsymbol{y}} = [\boldsymbol{m}_{\boldsymbol{\tau}}(\Dot{\boldsymbol{q}}_i)]_{i=1}^{\mathcal{D}}$, then
the joint distribution of observations and desired estimate with velocity $\dot{\boldsymbol{q}}_*$ is given by

\begin{equation}
\left[\begin{array}{c}
\boldsymbol{y} \\
\boldsymbol{\tau}^d(\Dot{\boldsymbol{q}}_*)
\end{array}\right] \sim \mathcal{N}\left(\left[\begin{array}{c}
\boldsymbol{m_y} \\
\boldsymbol{m_{\tau}}(\Dot{\boldsymbol{q}}_*)
\end{array}\right],\left[\begin{array}{cc}
\boldsymbol{K_y}(\Dot{\boldsymbol{Q}}, \Dot{\boldsymbol{Q}}) & \boldsymbol{K_{\tau}}(\Dot{\boldsymbol{Q}}, \Dot{\boldsymbol{q}}_*) \\
\boldsymbol{K_{\tau}}^{\top}(\Dot{\boldsymbol{Q}}, \Dot{\boldsymbol{q}}_*) & \boldsymbol{K_\tau}(\Dot{\boldsymbol{q}}_*, \Dot{\boldsymbol{q}}_*),
\end{array}\right]\right),
\label{eq: Diag-D-GP}
\end{equation}
where
\begin{subequations}
\begin{align}
    \boldsymbol{K_y}(\Dot{\boldsymbol{Q}}, \Dot{\boldsymbol{Q}}) &= [\boldsymbol{K_{\tau}}(\Dot{\boldsymbol{q}}_i, \Dot{\boldsymbol{q}}_j)]_{i,j=1}^\mathcal{D} + \sigma_{\varepsilon}^2 \boldsymbol{I} \\
    \boldsymbol{K_{\tau}}(\Dot{\boldsymbol{Q}}, \Dot{\boldsymbol{q}}_*) &= [\boldsymbol{K_{\tau}}(\Dot{\boldsymbol{q}}_i, \Dot{\boldsymbol{q}}_*)]_{i=1}^\mathcal{D}.
\end{align}
\end{subequations}

\subsubsection{Positive Semidefiniteness}
\label{sec: Positive Definiteness}

Before we show the guarantee for passivity of this specific structured GP, two minor restrictions on the used kernel functions and the structure of the prior mean need to be imposed.
\begin{assumption}
    Each 
    kernel component 
    is bounded by $\sigma_{f_{mn}}^2$, respectively, i.e. $|k_{d_{mn}}\left(\Dot{\boldsymbol{q}}, \Dot{\boldsymbol{q}}^{\prime}\right)| \leq \sigma_{f_{mn}}^2$.
\label{ass: class M full}
\end{assumption}

\begin{assumption}
    The prior mean matrix $\bm{M_D}(\Dot{\boldsymbol{q}})$ is diagonal and positive semi-definite, i.e. 
    $
        \boldsymbol{m}_{\boldsymbol{\tau}}(\Dot{\boldsymbol{q}}) = \operatorname{diag}(\bm{m_d}(\dot{\bm{q}}))\Dot{\boldsymbol{q}}
    $
    with non-negative ${m}_{{d}_m}(\Dot{\boldsymbol{q}})\geq0$.
    \label{ass: prior mean}
\end{assumption}
Note that Assumption~\ref{ass: class M full} includes a wide variety of covariance functions including the commonly used squared exponential (SE) kernel and Assumption~\ref{ass: prior mean} can be easily constructed by the prior mean.
Then, according to the conditional distribution of the mean estimate in Section~\ref{sec: Gaussian Process Framework}, we obtain
    \begin{equation}
    \begin{aligned}
        \hat{\boldsymbol{\tau}}^d(\Dot{\boldsymbol{q}}) 
        &= \operatorname{diag}(\bm{m_d}(\dot{\bm{q}}))\dot{\bm{q}} + \sum_{n=1}^{N} \Dot{q}_n \boldsymbol{K}^{\top}_{n}(\Dot{\boldsymbol{Q}}, \Dot{\boldsymbol{q}})\Tilde{\dot{\boldsymbol{Q}}}_n\boldsymbol{K}_{\boldsymbol{y}}^{-1}(\Dot{\boldsymbol{Q}}, \Dot{\boldsymbol{Q}})(\boldsymbol{y} - \boldsymbol{m_y}),
    \label{eq: full tau m hat}
    \end{aligned}
    \end{equation}
where $
        \boldsymbol{K}_{{n}}(\Dot{\boldsymbol{Q}}, \Dot{\boldsymbol{q}}) = [\boldsymbol{K}_{{n}}(\Dot{\boldsymbol{q}}_i, \Dot{\boldsymbol{q}})]_{i=1}^{D}
    $ with
    $
        \boldsymbol{K}_{{n}}(\Dot{\boldsymbol{q}}_i, \Dot{\boldsymbol{q}}) = \operatorname{diag}(\boldsymbol{k}_{d_{n}}(\Dot{\boldsymbol{q}}_i, \Dot{\boldsymbol{q}}))
    $.
    $\Tilde{\dot{\boldsymbol{Q}}}_n = \oplus_{i=1}^{\mathcal{D}} (\Tilde{\dot{{q}}}_{n,i})\boldsymbol{I}$ with $\boldsymbol{I} \in \mathbb{R}^{{N \times N}}$, 
and $\Tilde{\dot{{q}}}_{n,i}$ is the $(i, n)$-th element in ${\dot{\boldsymbol{Q}}}$. We also denote $\Delta \boldsymbol{y} = \boldsymbol{y} - \boldsymbol{m_y}$ and $\Tilde{\dot{{\boldsymbol{q}}}} = \operatorname{vec}({\dot{\boldsymbol{Q}}}^{\top})$.

\begin{theorem_}
    Consider
    the damping torque estimate~\eqref{eq: full tau m hat}
    under Ass.~\ref{ass: class M full} and~\ref{ass: prior mean}.
    Positive semidefiniteness of the matrix estimate $\hat{\boldsymbol{D}}(\Dot{\boldsymbol{q}})$ is guaranteed deterministically if  
    \begin{equation}
    \begin{aligned}
        \boldsymbol{\Sigma}_f & \preceq \frac{\sigma_{\varepsilon}^2}{\sqrt{\mathcal{D}} \|\Tilde{\dot{\boldsymbol{q}}}\|_{\infty} \left\Vert\Delta \boldsymbol{y} \right\Vert} \mathrm{diag}(\bm{m_d}(\dot{\bm{q}}))
    \label{eq: full D-GP bound}
    \end{aligned}
    \end{equation}
    \label{th: full bound}
    holds $\forall \dot{\boldsymbol{q}} \in \mathbb{R}^N$ with the hypervariance matrix $\boldsymbol{\Sigma}_f=[\,|\sigma_{f_{mn}}|\,]_{m,n=1}^N$.
\end{theorem_}
\begin{proof}
    According to \eqref{eq: full tau m hat}, we can compute the dissipated power as
    \begin{equation}
    \Dot{\boldsymbol{q}}^{\top}\hat{\boldsymbol{\tau}}^d(\Dot{\boldsymbol{q}}) = \Dot{\boldsymbol{q}}^{\top}\mathrm{diag}(\bm{m_d}(\dot{\bm{q}}))\Dot{\boldsymbol{q}} + \sum_{m=1}^{N} \sum_{n=1}^{N} \Dot{q}_m \Dot{q}_n \underbrace{\boldsymbol{k}^{\top}_{d_{mn}}(\Dot{\boldsymbol{Q}}, \Dot{\boldsymbol{q}})\Tilde{\dot{\boldsymbol{Q}}}_n}_{=\tilde{\bm{k}}_{mn}^{\top}}\underbrace{\boldsymbol{K}_{\boldsymbol{y}}^{-1}(\Dot{\boldsymbol{Q}}, \Dot{\boldsymbol{Q}})\Delta \boldsymbol{y}}_{=\Delta \boldsymbol{x}}.
    \end{equation}
    with $\boldsymbol{k}_{d_{mn}}(\Dot{\boldsymbol{Q}}, \Dot{\boldsymbol{q}})$ the m-th column of $\boldsymbol{K}_{{n}}(\Dot{\boldsymbol{Q}}, \Dot{\boldsymbol{q}})$.
    Under Assumption~\ref{ass: class M full}, we can obtain
    \begin{equation}
    \begin{aligned}
        \|\tilde{\bm{k}}_{mn}\| &= \|\boldsymbol{k}^{\top}_{d_{mn}}(\Dot{\boldsymbol{Q}}, \Dot{\boldsymbol{q}})\Tilde{\dot{\boldsymbol{Q}}}_n\| \leq \|\boldsymbol{k}_{d_{mn}}(\Dot{\boldsymbol{Q}}, \Dot{\boldsymbol{q}})\|\|\Tilde{\dot{\boldsymbol{Q}}}_n\| \leq \sqrt{\mathcal{D}} |\sigma_{{f}_{mn}}| \|\Tilde{\dot{\boldsymbol{q}}}\|_{\infty}.
    \end{aligned}
    \end{equation}
    Through the matrix norm inequality \citep{LA_1983},
    \begin{equation}
    \begin{aligned}
        \|\Delta \boldsymbol{x}\| &=\|\boldsymbol{K}_{{\boldsymbol{y}}}^{-1}(\Dot{\boldsymbol{Q}}, \Dot{\boldsymbol{Q}})\Delta \boldsymbol{y}\| \leq \|\boldsymbol{K}_{{\boldsymbol{y}}}^{-1}(\Dot{\boldsymbol{Q}}, \Dot{\boldsymbol{Q}})\|\|\Delta \boldsymbol{y}\| 
        &\leq \sigma_{\max}\left(\boldsymbol{K}_{{\boldsymbol{y}}}^{-1}(\Dot{\boldsymbol{Q}}, \Dot{\boldsymbol{Q}})\right) \left\Vert\Delta \boldsymbol{y} \right\Vert \\
    \end{aligned}
    \end{equation}
    where $\sigma_{\max}$ denotes the maximum singular value of a matrix.
    Since $\boldsymbol{K}_{{\boldsymbol{y}}}^{-1}(\Dot{\boldsymbol{Q}}, \Dot{\boldsymbol{Q}})$ is positive definite and symmetric, its eigenvalues are equal to its singular values.
    Based on Weyl's inequality \citep{franklin_2012_matrix} and positive semidefiniteness of the Gram matrix,
    \begin{equation}
    \begin{aligned}
        \lambda_{\min}\left(\boldsymbol{K}_{\boldsymbol{y}}\right) = \lambda_{\min}\left(\boldsymbol{K}_{\boldsymbol{\tau}} + \sigma_\varepsilon^2 \boldsymbol{I}\right) \geq \lambda_{\min}\left(\boldsymbol{K}_{\boldsymbol{\tau}}\right) + \lambda_{\min}\left(\sigma_\varepsilon^2 \boldsymbol{I}\right)  \geq \lambda_{\min}\left(\sigma_\varepsilon^2 \boldsymbol{I}\right) = \sigma_\varepsilon^2.
    \end{aligned}
    \end{equation}
    We neglect the variable dependency here for notation simplicity.
    Since
    $
        \lambda_{\max}\left(\boldsymbol{A}^{-1}\right) = \frac{1}{\lambda_{\min}\left(\boldsymbol{A}\right)}
    $
    for regular $\bm{A}\in\mathbb{R}^{N\times N}$ then
    \begin{equation}
    \begin{aligned}
         \|\Delta \boldsymbol{x}\| &\leq  \frac{1}{\lambda_{\min}\left(\boldsymbol{K}_{\boldsymbol{y}}(\Dot{\boldsymbol{Q}}, \Dot{\boldsymbol{Q}})\right)} \left\Vert\Delta \boldsymbol{y} \right\Vert \leq \frac{ \left\Vert\Delta \boldsymbol{y} \right\Vert}{\sigma_\varepsilon^2} \\
        \tilde{\bm{k}}_{mn}^{\top}\Delta \boldsymbol{x} &\leq  |\tilde{\bm{k}}_{mn}^{\top}\Delta \boldsymbol{x}| \leq \|\tilde{\bm{k}}_{mn}\|\|\Delta \boldsymbol{x}\| \leq \sqrt{\mathcal{D}} \|\Tilde{\dot{\boldsymbol{q}}}\|_{\infty}\left\Vert\Delta \boldsymbol{y}\right\Vert \frac{|\sigma_{f_m}|}{\sigma_{\varepsilon}^2}.
    \end{aligned}
    \end{equation}
    Next, we derive from $\Dot{\boldsymbol{q}}^{\top}\!\hat{\boldsymbol{\tau}}^d\!\geq\!\bm{0}$ the inequality $\bm{M_D}\!\succeq\![\, \tilde{\bm{k}}_{mn}^\top \Delta\bm{x} \,]_{mn}$. Broken up further, this gives
    \begin{equation}
    \begin{aligned}
        \Dot{\boldsymbol{q}}^{\top}\hat{\boldsymbol{\tau}}^d(\Dot{\boldsymbol{q}}) &\geq \Dot{\boldsymbol{q}}^{\top}\mathrm{diag}(\bm{m_d}(\dot{\bm{q}}))\Dot{\boldsymbol{q}} - \sum_{m=1}^{N} \sum_{n=1}^{N} | \Dot{q}_m \Dot{q}_n \tilde{\bm{k}}_{mn}^{\top}\Delta \boldsymbol{x} | \\
        & \geq \Dot{\boldsymbol{q}}^{\top}\mathrm{diag}(\bm{m_d}(\dot{\bm{q}}))\Dot{\boldsymbol{q}} - \sum_{m=1}^{N} \sum_{n=1}^{N} | \Dot{q}_m | | \Dot{q}_n| \|\tilde{\bm{k}}_{mn}\|\|\Delta \boldsymbol{x}\| \\
        & \geq \Dot{\boldsymbol{q}}^{\top}\mathrm{diag}(\bm{m_d}(\dot{\bm{q}}))\Dot{\boldsymbol{q}} - \sqrt{\mathcal{D}} \|\Tilde{\dot{\boldsymbol{q}}}\|_{\infty}\left\Vert\Delta \boldsymbol{y}\right\Vert \frac{1}{\sigma_{\varepsilon}^2} \sum_{m=1}^{N} \sum_{n=1}^{N} | \Dot{q}_m | | \Dot{q}_n|  |\sigma_{f_{mn}}| \\
        & = |\Dot{\boldsymbol{q}}|^{\top}\left(\mathrm{diag}(\bm{m_d}(\dot{\bm{q}})) - \tfrac{\sqrt{\mathcal{D}} \|\Tilde{\dot{\boldsymbol{q}}}\|_{\infty}\left\Vert\Delta \boldsymbol{y}\right\Vert}{\sigma_{\varepsilon}^2} \boldsymbol{\Sigma}_f\right)|\dot{\bm{q}}| \geq 0
    \end{aligned}
    \end{equation}
    Thus, passivity can be guaranteed by enforcing the matrix inequality~\eqref{eq: full D-GP bound}, proving the result.
\end{proof}
Theorem~\ref{th: full bound} asserts that, when provided with observations of velocities and torques, ensuring guaranteed passivity involves selecting a basic kernel component, such as the SE kernel, whose bound can be regulated through hyperparameters $\sigma_{f_{mn}}$. This guarantee can be achieved by opting for either a sufficiently large prior mean or noise variance. However, the bound of~\eqref{eq: full D-GP bound} becomes tighter for increasing data set sizes.
A specific conclusion can also be drawn when we contemplate a diagonal damping matrix.

\subsection{Matrix-vector Structured Gaussian Processes for Diagonal Damping Matrices}
\label{sec: Diag-D-GP}

\subsubsection{GP Structure and Formulation}
In this section, we consider a specific restriction on the damping matrix where
    $\boldsymbol{D}(\Dot{\boldsymbol{q}}) \in \mathbb{R}^{N \times N}$ is a diagonal damping matrix which can be written as $\boldsymbol{D}(\Dot{\boldsymbol{q}}) = \text{diag}(d_1(\Dot{\boldsymbol{q}}), d_2(\Dot{\boldsymbol{q}}), \ldots, d_N(\Dot{\boldsymbol{q}}))$.
Defining $\boldsymbol{d}(\Dot{\boldsymbol{q}}) = [d_1(\Dot{\boldsymbol{q}}), d_2(\Dot{\boldsymbol{q}}), \ldots, d_N(\Dot{\boldsymbol{q}})]^{\top}$, a vector-valued damping function can be modeled as
\begin{equation}
    \boldsymbol{d}(\Dot{\boldsymbol{q}}) \sim \mathcal{G} \mathcal{P}\left(\bm{M_D}(\Dot{\boldsymbol{q}}), \boldsymbol{K}_{\boldsymbol{d}}(\Dot{\boldsymbol{q}}, \Dot{\boldsymbol{q}}^{\prime} )\right),
    \label{eq: diag basic GP}
\end{equation}
where 
$\bm{M_D}(\Dot{\boldsymbol{q}}) = [m_{d_1}(\Dot{\boldsymbol{q}}), \dots, m_{d_N}(\Dot{\boldsymbol{q}})]^{\top}$
and
$
\boldsymbol{K}_{\boldsymbol{d}}(\Dot{\boldsymbol{q}}, \Dot{\boldsymbol{q}}^{\prime}) = \mathrm{diag}(
        k_{d_1}\left(\Dot{\boldsymbol{q}}, \Dot{\boldsymbol{q}}^{\prime}\right), \dots k_{d_N}\left(\Dot{\boldsymbol{q}}, \Dot{\boldsymbol{q}}^{\prime}\right)
        ).
$ 
We can write the dissipative torque as
\begin{equation}
    \boldsymbol{\tau}^d(\Dot{\boldsymbol{q}}) = \boldsymbol{D}(\Dot{\boldsymbol{q}})\Dot{\boldsymbol{q}} = \text{diag}(\boldsymbol{d}(\Dot{\boldsymbol{q}}))\Dot{\boldsymbol{q}} = \text{diag}(\Dot{\boldsymbol{q}}) \boldsymbol{d}(\Dot{\boldsymbol{q}}),
\end{equation}
Defining a mean matrix $\bm{M_D}(\Dot{\boldsymbol{q}}) = \operatorname{diag}(\bm{M_D}(\Dot{\boldsymbol{q}}))$, 
the covariance of the damping torque $\boldsymbol{\tau}^d(\Dot{\boldsymbol{q}})$ is
\begin{equation}
\begin{aligned}
        \operatorname{cov}(\boldsymbol{\tau}^d(\Dot{\boldsymbol{q}})) &= \mathbb{E}[(\boldsymbol{D}(\Dot{\boldsymbol{q}})\Dot{\boldsymbol{q}} - \bm{M_D}(\Dot{\boldsymbol{q}})\Dot{\boldsymbol{q}})(\boldsymbol{D}(\Dot{\boldsymbol{q}}^{\prime})\Dot{\boldsymbol{q}}^{\prime} - \bm{M_D}(\Dot{\boldsymbol{q}}^{\prime})\Dot{\boldsymbol{q}}^{\prime})^{\top}] \\
        &= \mathbb{E}[\operatorname{diag}(\Dot{\boldsymbol{q}})(\boldsymbol{d}(\Dot{\boldsymbol{q}})-\bm{M_D}(\Dot{\boldsymbol{q}}))(\boldsymbol{d}(\Dot{\boldsymbol{q}}^{\prime})-\bm{M_D}(\Dot{\boldsymbol{q}}^{\prime}))^{\top}\operatorname{diag}(\Dot{\boldsymbol{q}}^{\prime})^{\top}] \\
        &= \operatorname{diag}(\Dot{\boldsymbol{q}})\mathbb{E}[(\boldsymbol{d}(\Dot{\boldsymbol{q}})-\bm{M_D}(\Dot{\boldsymbol{q}}))(\boldsymbol{d}(\Dot{\boldsymbol{q}}^{\prime})-\bm{M_D}(\Dot{\boldsymbol{q}}^{\prime}))^{\top}]\operatorname{diag}(\Dot{\boldsymbol{q}}^{\prime}) \\
        &=\operatorname{diag}(\Dot{\boldsymbol{q}})
        \boldsymbol{K}_{\boldsymbol{d}}(\Dot{\boldsymbol{q}}, \Dot{\boldsymbol{q}}^{\prime})
        \operatorname{diag}(\Dot{\boldsymbol{q}}^{\prime})
\end{aligned}
\end{equation}

\noindent The torque GP model and joint distribution are given analogously by~\eqref{eq: Diag tau GP}--\eqref{eq: Diag-D-GP}
with different kernel
\begin{equation}
   \boldsymbol{K}_{\boldsymbol{\tau}}(\Dot{\boldsymbol{q}}, \Dot{\boldsymbol{q}}^{\prime})=\operatorname{diag}(\Dot{\boldsymbol{q}})
        \boldsymbol{K}_{\boldsymbol{d}}(\Dot{\boldsymbol{q}}, \Dot{\boldsymbol{q}}^{\prime})
        \operatorname{diag}(\Dot{\boldsymbol{q}}^{\prime}) \,.
\end{equation}

\subsubsection{Positive Semidefiniteness}

Having introduced this diagonal structured GP formulation, we investigate its physical properties as positive semidefiniteness of damping matrix to guarantee passivity.
Before that, We also first impose a trivial structural restriction on the used kernel functions.
\begin{assumption}
    Each 
    kernel component $k_{dn}(\Dot{\boldsymbol{q}}, \Dot{\boldsymbol{q}}^{\prime})$ is bounded by $\sigma_{f_n}^2$,
    i.e. $|k_{d_{n}}\left(\Dot{\boldsymbol{q}}, \Dot{\boldsymbol{q}}^{\prime}\right)| \leq \sigma_{f_{n}}^2$.
\label{ass: class M}
\end{assumption}

With the diagonal form of $\boldsymbol{D}(\Dot{\boldsymbol{q}})$, the torque components are independent, so w.l.o.g. by marginalization, we can analyze one component of the torque estimation to show its property, which is
    \begin{equation}
    \begin{aligned}
        \hat{\tau}_n^d(\Dot{\boldsymbol{q}}) 
        &= m_{d_n}(\Dot{\boldsymbol{q}})\Dot{q}_n + \Dot{q}_n \boldsymbol{k}^{\top}_{d_n}(\Dot{\boldsymbol{Q}}, \Dot{\boldsymbol{q}})\operatorname{diag}(\Tilde{\dot{{\boldsymbol{q}}}})\boldsymbol{K}_{\boldsymbol{y}}^{-1}(\Dot{\boldsymbol{Q}}, \Dot{\boldsymbol{Q}})(\boldsymbol{y} - \boldsymbol{m_y})
    \label{eq: tau n hat}
    \end{aligned}
    \end{equation}
    where $\boldsymbol{k}_{\tau_n}(\Dot{\boldsymbol{Q}}, \Dot{\boldsymbol{q}})$ is the n-th colunm in $\boldsymbol{K_{\tau}}(\Dot{\boldsymbol{Q}}, \Dot{\boldsymbol{q}})$, and $\boldsymbol{k}^{\top}_{d_n}(\Dot{\boldsymbol{Q}}, \Dot{\boldsymbol{q}})$ is the n-th colunm in $\boldsymbol{K_{d}}(\Dot{\boldsymbol{Q}}, \Dot{\boldsymbol{q}})$.
\begin{corollary_}
    Consider the posterior estimate $\hat{\boldsymbol{D}}(\Dot{\boldsymbol{q}})$ with kernels components under Ass.~\ref{ass: class M} and the damping torque estimate~\eqref{eq: tau n hat} for each torque component. 
    Positive semidefiniteness of the matrix estimate $\hat{\boldsymbol{D}}(\Dot{\boldsymbol{q}})$ is guaranteed deterministically if $\forall n=1, \dots, N$, 
    \begin{equation}
        |\sigma_{f_n}| \leq \frac{\sigma_{\varepsilon}^2}{\sqrt{D} \|\Tilde{\dot{\boldsymbol{q}}}\|_{\infty} \left\Vert\Delta \boldsymbol{y} \right\Vert}m_{d_n}(\Dot{\boldsymbol{q}})
    \label{eq: Diag-D-GP bound}
    \end{equation}
    \label{th: diag bound}
\end{corollary_}
\begin{proof}
    According to~\eqref{eq: tau n hat}, the dissipated power for each dimension is
    \begin{equation}
    \Dot{q}_n\hat{\tau}_n^d(\Dot{\boldsymbol{q}}) = {\Dot{q}_n}^2(m_{d_n}(\Dot{\boldsymbol{q}}) + \underbrace{\boldsymbol{k}^{\top}_{d_n}(\Dot{\boldsymbol{Q}}, \Dot{\boldsymbol{q}})\operatorname{diag}(\Tilde{\dot{{\boldsymbol{q}}}})}_{\boldsymbol{k}_n^{\top}}\underbrace{\boldsymbol{K}_{\boldsymbol{y}}^{-1}(\Dot{\boldsymbol{Q}}, \Dot{\boldsymbol{Q}})\Delta \boldsymbol{y}}_{\Delta \boldsymbol{x}}).
    \end{equation}
    Similar as the proof in Theorem~\ref{th: full bound}, We can obtain
    \begin{equation}
    \begin{aligned}
        \|\boldsymbol{k}_n\| &= \|\boldsymbol{k}^{\top}_{d_n}(\Dot{\boldsymbol{Q}}, \Dot{\boldsymbol{q}})\operatorname{diag}(\Tilde{\dot{{\boldsymbol{q}}}})\| \leq \|\boldsymbol{k}_{d_n}(\Dot{\boldsymbol{Q}}, \Dot{\boldsymbol{q}})\|\|\operatorname{diag}(\Tilde{\dot{{\boldsymbol{q}}}})\| \leq \sqrt{\mathcal{D}} |\sigma_{f_n}| \|\Tilde{\dot{\boldsymbol{q}}}\|_{\infty} \\
    \end{aligned}
    \end{equation}
    \begin{equation}
    \begin{aligned}
        \|\Delta \boldsymbol{x}\| &=\|\boldsymbol{K}_{{\boldsymbol{y}}}^{-1}(\Dot{\boldsymbol{Q}}, \Dot{\boldsymbol{Q}})\Delta \boldsymbol{y}\| \leq \|\boldsymbol{K}_{{\boldsymbol{y}}}^{-1}(\Dot{\boldsymbol{Q}}, \Dot{\boldsymbol{Q}})\|\|\Delta \boldsymbol{y}\| 
        \leq \frac{\left\Vert\Delta \boldsymbol{y}\right\Vert}{\sigma_{\varepsilon}^2} \\
    \end{aligned}
    \end{equation}
    then
    \begin{equation}
    \begin{aligned}
        \boldsymbol{k}_n^{\top}\Delta \boldsymbol{x} &\geq - |\boldsymbol{k}_n^{\top}\Delta \boldsymbol{x}| \geq -\|\boldsymbol{k}_n\|\|\Delta \boldsymbol{x}\| \geq -\sqrt{D} \|\Tilde{\dot{\boldsymbol{q}}}\|_{\infty}\left\Vert\Delta \boldsymbol{y} \right\Vert \frac{|\sigma_{f_n}|}{\sigma_\varepsilon^2}
    \end{aligned}
    \end{equation}
    With the condition of~\eqref{eq: Diag-D-GP bound}, 
    $
        \boldsymbol{k}_n^{\top}\Delta \boldsymbol{x} \geq - m_{d_n}(\Dot{\boldsymbol{q}})
    $ and
    the dissipated power
    \begin{equation}
    \Dot{q}_n\hat{\tau}_n^d(\Dot{\boldsymbol{q}}) = {\Dot{q}_n}^2\left(m_{d_n}(\Dot{\boldsymbol{q}}) + \boldsymbol{k}_n^{\top}\Delta \boldsymbol{x} \right) \geq 0.
    \end{equation}
    Because of the diagonal structure of $\hat{\boldsymbol{D}}(\Dot{\boldsymbol{q}})$, its positive semidefiniteness can be directly derived.
\end{proof}

\section{Simulation and Results}
\label{sec: Simulation and Results}

We evaluate our methods by estimating the unknown aerodynamics based on an aircraft benchmark by ONERA \& AIRBUS, which simulates the aircraft landing period in full configuration from 1000ft above the runway until touch-down \citep{CALC_2016}. 
We determine the domain of the air velocities by setting up the three-dimensional wind inputs that change periodically within $[-25, 25] m/s$ with different frequencies $[0,1, 0.2, 0.3]$ and phases $[0, 2, 3]$, and obtain the air velocities in an angle-based coordinate which consists of the angle of attack $\alpha$, the side-slip angle $\beta$ and the norm of the airspeed $v_a$. The generalized velocity is set to $\Dot{\boldsymbol{q}} = [\alpha, \beta, v_a]^\top$. We also use the benchmark model to collect the corresponding aerodynamic forces $\boldsymbol{\tau}^d$, namely drag, lateral and lift. For notation simplicity, we use ARD-GP to denote the standard GP methods with SE kernel based on automatic relevance determination (ARD) \citep{rasmussen_gaussian_2006}, Diag-D-GP to denote the GP methods with the  by our proposed diagonal form damping matrix, and Full-D-GP to denote the methods with the full damping matrix.
To make comparison more in consistent and the optimization process simpler, we choose a fixed noise variance $\sigma_\varepsilon=100$ for all unconstrained method and choose $\sigma_\varepsilon=10^8$ to fulfill the constraint. Note that the aerodynamic forces are in the order of $10^6$, so this large noise variance is still reasonable in practice. We first optimize all the other hyperparameters of ARD-GP by minimizing the mean squared error on a validation set. Then we choose the same length-scale $\boldsymbol{\ell} = [18, 18, 0.2]^{\top}$ and optimize the hypervariances for all the other methods.
We first compare the estimation performance of each method and then show the guarantee of passivity numerically with our constrained GP.

\subsection{Tracking Performance and Data Efficiency}

The training results are evaluated in two types. One is the aerodynamics tracking error based on a trajectory along the landing period, as shown in Fig.~\ref{fig: fitting traj error}. 
We sample 50 data points from the trajectory with equivalent interval as the training set and 100 points as validation set. 
The relative error is the difference between the estimation and ground truth normalized with the mass of the aircraft. Our proposed Full-D-GP performs the best compared with the others, which obtain a relative error with almost 0 for all three dimensions.
The statistical values of the lateral estimation error as an example for different methods are listed in Table~\ref{tab: statistical evaluation}.
Note that with the assumption of a diagonal damping matrix, we lose the dependencies between each dimension. The relative error of drag in Diag-D-GP is larger than the others because the drag should also contain components based on the last velocity dimension, norm of the airspeed.
The other evaluation type is in a volume space in the domain of the three-dimensional inputs. In this case, the inputs can be chosen as any combination of the values for each dimension in the limited domain. We evaluate the normalized mean squared error (NMSE) in the volume space with increased training data by
$
    {\sum_{i=1}^{N}(\hat{y}_i - y_i)^2}/{\sum_{i=1}^{N}(y_i - \bar{y})^2},
$
where $\hat{y}_i$ and $y_i$ are the estimation and ground truth of the i-th data, respectively, and $\bar{y}$ is the mean of the data. The results are shown in Fig~\ref{fig: training efficiency}. Full-D-GP achieves the highest training efficiency for all dimensions and converges to a lower NMSE, which follows the same phenomena as training for a trajectory.
The NMSE estimated by the constrained Full-D-GP remains at a certain level with the increase of the data set because of the large noise variance.
The simulation results show a significant improvement in data efficiency of training when we consider more physical structure as a prior. By involving our proposed constraint, the Full-D-GP maintains efficiency and reasonable performance.

\begin{figure}[htb]
\centering
\subfigure[Unit mass error of aerodynamics estimation for a trajectory]{
  \includegraphics[width=0.46\textwidth] {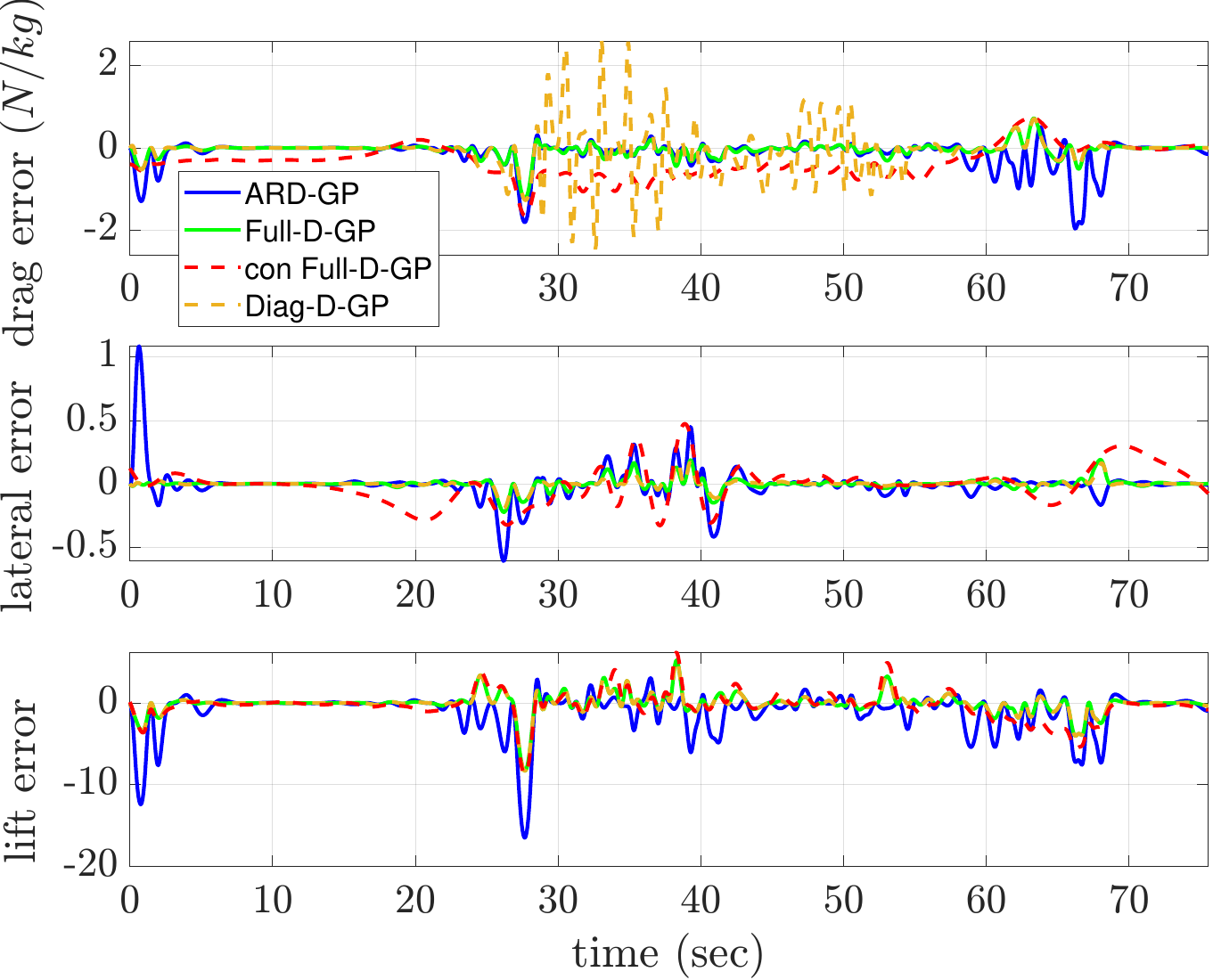}
  \label{fig: fitting traj error}
 }
\quad
\subfigure[Data efficiency in an uniformly-spaced airspeed volume]{
  \includegraphics[width=0.47\textwidth] {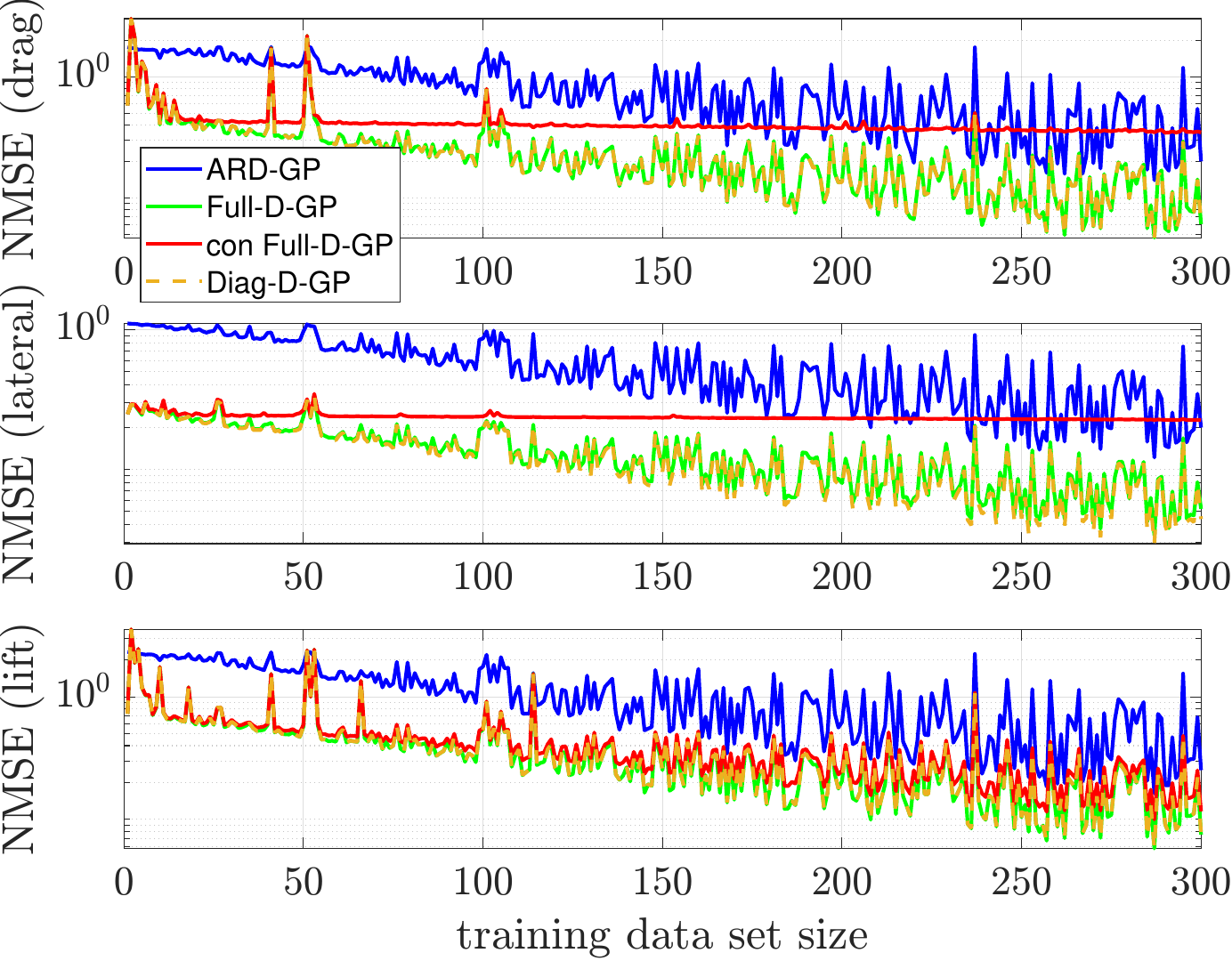}
  \label{fig: training efficiency}
 }
\caption{Training results based on trajectory and volume space}
\label{fig: Fitting performance}
\end{figure}

\begin{table}[htb]
    \centering
    \begin{tabular}{cccc}
         & Full-D-GP & Diag-D-GP & ARD-GP\\
        \hline
        mean & $-7.83 \times 10^{-4}$ & $-5.73 \times 10^{-4}$  & -0.0052 \\
        \hline
        variance & 0.0019 & 0.0012 & 0.175 \\
    \end{tabular}
    \caption{Statistical values of the relative lateral force estimation error along a landing trajectory}
    \label{tab: statistical evaluation}
\end{table}

\subsection{Passivity Characteristic}
The passivity can be preserved if the hyperparameters are chosen with the conditions we derive in Section~\ref{sec: Dissipative Gaussian Processes}. By involving these conditions as constraints during optimization, we can show the passivity of the estimation numerically by presenting 
the dissipated power.
Consider the sufficient condition~\eqref{eq: full D-GP bound} for full structured kernel, we determine the mean $m_{d_n}(\dot{\boldsymbol{q}})$ by solving
a least squared problem to minimize $\|\boldsymbol{\tau}_n^d - m_{d_n} (\dot{\boldsymbol{q}})\dot{\boldsymbol{q}}_n\|$ where $\boldsymbol{\tau}_n^d$ and $\dot{\boldsymbol{q}}_n$ are the collected forces and velocities of n-th dimension, respectively.
$\mathcal{D}, \|\Tilde{\dot{\boldsymbol{q}}}\|_{\infty}$ and $\left\Vert\Delta \boldsymbol{y}\right\Vert$ can be directly obtained when the data is given. 
The dissipated power $P = \dot{\boldsymbol{q}}^{\top} \boldsymbol{\tau}^d$ should always be non-negative for a passive system, whose distribution based on our test data collected from the aircraft benchmark
and the corresponding estimation based on the hyperparameters with and without constraints are shown in Fig.~\ref{fig: Inputs-outputs mapping Diag-D-GP}.
The collected data is noisy but most of the points are positive, so the estimation based on non-constrained GP shifts the distribution to a positive direction but still cannot preserve passivity. The passivity is deterministically guaranteed with constrained hyperparameters.

\begin{figure}[htb]
    \centering
    \includegraphics[width=0.9\textwidth]{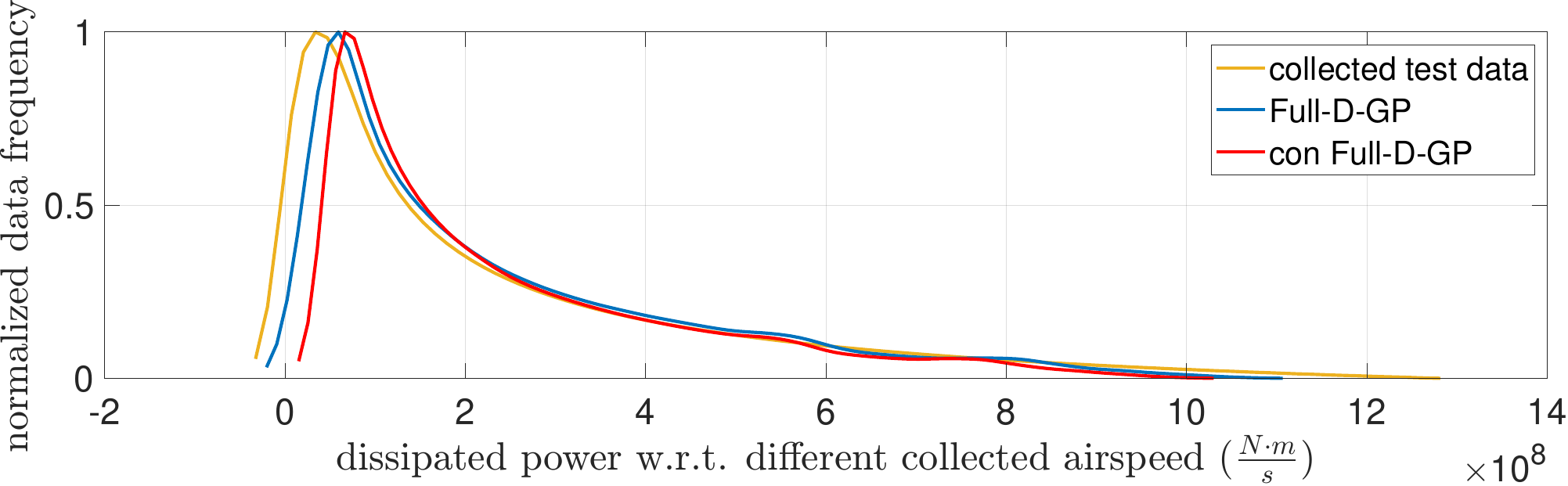}
    \caption{Dissipated power distribution from the collected test data and the corresponding estimation based on constrained and unconstrained methods}
    \label{fig: Inputs-outputs mapping Diag-D-GP}
\end{figure}

\section{Conclusion}
\label{sec: Conclusion}

Our approach towards structural learning methods for friction modeling and identification integrates physical structure as prior knowledge with Gaussian Processes. Considering the mechanical system with the velocity-dependent dissipative term, we introduce a matrix-vector structure for prior mean and kernels, with a full and diagonal damping matrix, respectively. We investigate their passivity guarantee of the damping estimates by showing the positive semidefiniteness of the damping matrix. Constrained Gaussian Process algorithms are proposed to guarantee passivity during training by optimizing hyperparameters with constraints. An aircraft benchmark simulation is employed to demonstrate the efficacy of our methodology. By considering more information in the structure, higher data efficiency and estimate accuracy can be obtained. In addition, the passivity characteristic of the estimated aerodynamic forces w.r.t. the generalized velocities is validated numerically.

\acks{This work has received support by the Horizon Europe Framework Programme project euROBIN and by the European Research Council (ERC) Consolidator Grant "Safe data-driven control for human-centric systems" (CO-MAN) under grant agreements 101070596 and 864686, respectively.}

\bibliographystyle{ieeetr}

\end{document}